\newcommand{\R}{\mathbb{R}}
\newcommand{\eps}{\varepsilon}
\newcommand{\re}{\text{Re}}
\newcommand{\we}{\text{We}}
\renewcommand{\phi}{\varphi}
\renewcommand{\bar}[1]{\overline{#1}}
\newcommand{\norm}[1]{\left\|#1\right\|}
\newcommand{\pder}{\partial}
\newcommand{\lb}{\left(}
\newcommand{\lsb}{\left[}
\newcommand{\rb}{\right)}
\newcommand{\rsb}{\right]}
\newcommand{\lpm}{\begin{pmatrix}}
\newcommand{\rpm}{\end{pmatrix}}
\theoremstyle{plain}
\newtheorem{thm}{\bf Theorem}[section]
\newtheorem{lem}[thm]{\bf Lemma}
\newtheorem{cor}[thm]{\bf Corollary}
\newtheorem{defi}{\bf Definition}
\theoremstyle{remark}
\newtheorem{rem}[thm]{\bf  Remark}
\numberwithin{equation}{section}
\numberwithin{table}{section}
\numberwithin{figure}{section}
\newcolumntype{d}[1]{D{.}{.}{2.#1}}
\newcolumntype{2}{D{.}{.}{1.2}}
\newcolumntype{5}{D{.}{.}{2.5}}
\title[Non-existence]{Nonexistence of steady solutions for rotational slender fibre spinning with surface tension}
\author{Thomas Götz}
\address{Thomas~Götz, Mathematical Institute, University Koblenz, Germany.\newline
E--mail:~goetz@uni-koblenz.de}
\author{Axel Klar}
\address{Axel Klar, Dept.~of Mathematics, TU Kaiserslautern, Germany.\newline
E--mail:~klar@mathematik.uni-kl.de}
\begin{document}

\maketitle

\begin{abstract}
Reduced one-dimensional equations for the stationary, isothermal rotational spinning process of slender fibers are considered for the case of large Reynolds ($\delta=3/\re\ll 1$) and small Rossby numbers ($\eps\ll 1$). Surface tension is included in the model using the parameter $\kappa=\sqrt{\pi}/(2\we)$ related to the inverse Weber number. The inviscid case $\delta=0$ is discussed as a reference case. For the viscous case $\delta > 0$ numerical simulations indicate, that for a certain parameter range, no physically relevant solution may exist. Transferring properties of the inviscid limit to the viscous case, analytical bounds for the initial viscous stress of the fiber are obtained. A good agreement with the numerical results is found. These bounds give strong evidence, that for $\delta > 3\eps^2 \lb 1- \frac{3}{2}\kappa +\frac{1}{2}\kappa^2\rb$ no physical relevant stationary solution can exist.
\end{abstract}

\keywords{\textbf{Keywords:} Rotational Fiber Spinning, Viscous Fibers, Surface Tension, Boundary Value Problem, Existence of Solutions}

\section{Introduction}

Being an important technology, fiber spinning has recently gained considerable attention in the mathematical literature, see~\cite{WDKS02, DKW02, Pan06, PWM08, FZ01, DMM00, DewHowWil94, CKK98, GRRST01, Hag98, Lan97, Yar93, MW08}. Describing a slender fluid jet, the model equations covering spinning processes are settled on balance laws for mass, momentum and energy. Usually, due to slenderness of the fiber, cross--sectional averaging is performed to obtain spatially one--dimensional models~\cite{PWM08, Lan97, DMM00, DewHowWil94, CKK98}. 

In the present work we consider the isothermal, rotational spinning of slender  fibers. As an example of industrial relevance, one might have the production of glass fibers in mind. In this situation energy balance  holds and Coriolis and centrifugal forces as well as surface tension effects have to be included, see~\cite{DKW02, Pan06, MW08}. Neglecting gravity the reduced model equations read in $\R^2$ as
\begin{subequations}
\label{E:Euler1}
\begin{align}
   \pder_t A +\pder_s (Au) &= 0\;, \label{E:Euler1:a}\\
   \pder_t v + u \pder_s v 
	&= \delta \frac{\pder_s (A \pder_s \gamma \pder_s u)}{A} 
	+ \kappa\frac{\pder_s (\sqrt{A}\pder_s\gamma)}{A}
	+ \frac{2}{\eps} v^\perp 
	+ \frac{1}{\eps^2} \gamma\;,\label{E:Euler1:b}\\
   \pder_t \gamma  + u \pder_s \gamma &= v\;, \label{E:Euler1:c}\\
   \norm{\pder_s \gamma} &= 1\;. \label{E:Euler1:d}
\end{align}
\end{subequations}
The independent variables are the arc--length $s \in [0,L]$ and the time $t \in [0, \infty)$. By $\gamma(s,t)\in \R^2$ and $v(s,t)\in \R^2$ we denote the position and the velocity of a point located on the fiber's centerline at time $t$. Additionally, $u(s,t)\in \R$ describes the tangential velocity of a fluid particle moving along the centerline. By $A=A(s,t)$ we denote the cross--sectional area of the fiber. The parameter $\delta=3/\re$ is related to the Reynolds number $\re$ and $\eps$ is the Rossby number, i.e.~the inverse of the scaled rotation frequency. The surface tension parameter $\kappa=\sqrt{\pi}/(2\we)$ is related to the inverse Weber number~\cite{MW08}. Equation~\eqref{E:Euler1:a} is the continuity equation. The momentum equation~\eqref{E:Euler1:b} includes surface tension and viscous, Coriolis and centrifugal forces. Equation~\eqref{E:Euler1:c} describes the fact that the total derivative of $\gamma$ is given by $v$ and finally, Equation~\eqref{E:Euler1:d} is the condition for the arc length parametrization of the fiber which is defined by a normalized tangent.

We mention that equations (\ref{E:Euler1}) have been derived from the full set of three-dimensional Navier-Stokes equations by an asymptotic analysis in the above mentioned papers. It is understood in many papers  that in a boundary layer close to the orifice  the above slender models do not describe the jet well. In this small region near the orifice the flow is fully three-dimensional and other  effects might become important. In the present work we identify parameter regions where the slender model  breaks down and can not be used any more to describe  fiber spinning. For further investigation on the breakup of liquid jets from a rotating orifice we refer to \cite{Par1,Par2}.

In application relevant cases one is typically interested in stationary, i.e.~time independent solutions for small values of the parameters.
The purpose of our present investigations is to identify  those parameter regimes for $\delta$, $\eps$ and $\kappa$ where physically relevant stationary solutions exist for  equations (\ref{E:Euler1})  . In a previous work~\cite{GKUW08} we analyzed the case without surface tension, i.e.~$\kappa=0$. Applying similar techniques, we are able to extend these results to the model including surface tension. The results will show, that surface tension does not help to overcome the problem of non--existence of physically relevant solutions --- a hope that was expressed in the outlook of our earlier work~\cite{GKUW08}.

Finally, we mention that in \cite{AMW10} a one-dimensional rod model allowing for stretching, bending and twisting has been investigated asymptotically and numerically, which allows for simulations in the whole range of parameters. Moreover, a more detailed analysis of the parameter regions where the string model breaks down is given there for the case without surface tension.

This work is organized as follows:~In Section~\ref{S:Simplify}, we rewrite the stationary version of the above system~\eqref{E:Euler1} in a more tractable form. Appropriate boundary conditions are specified. Section~\ref{S:Inviscid} is devoted to the inviscid case $\delta=0$. The main result is contained in Section~\ref{S:Viscous}, where we characterize the stationary solutions of the viscous problem $\delta>0$ and show the non--existence of physically reasonable stationary solution for some range of the parameters $\delta$, $\eps$ and $\kappa$. In Section~\ref{S:Numerics} we present numerical simulations based on a solution of the boundary value problem in Eulerian coordinates. The obtained results confirm the analysis.

\section{The stationary case}
\label{S:Simplify}

In Eqn.~\eqref{E:Euler1:a}, the mass flux is given by $Au$. In the stationary case, the mass flux is constant and after appropriate scaling one gets $Au=1$. Physical meaningful solutions are characterized by strictly positive $u$, hence $A=1/u$. Furthermore,~\eqref{E:Euler1:c} reads as $v=u\pder_s \gamma$ and inserting this into~\eqref{E:Euler1:b} yields
\begin{subequations}
\label{E:Euler2}
\begin{align}
\label{E:Euler2:a}
   \pder_s(u\,\pder_s \gamma) 
	&= \delta \pder_s \lb \frac{\pder_s u\, \pder_s\gamma}{u}\rb 
	+ \kappa \pder_s \lb \frac{\pder_s\gamma}{\sqrt{u}}\rb
	+ \frac{2}{\eps} \pder_s\gamma^\perp 
	+ \frac{1}{\eps^2}\frac{1}{u}\gamma\;, \\
	\norm{\pder_s\gamma} &= 1\;. \label{E:Euler2:b}
\end{align}
\end{subequations}
To eliminate the algebraic constraint~\eqref{E:Euler2:b}, we project $\pder_s \gamma$ onto the unit circle and parameterize $\pder_s \gamma :=\tau = (\cos\alpha, \sin\alpha)$ by the angle $\alpha=\alpha(s)$. Furthermore, we use polar coordinates $\gamma=r(\cos\phi, \sin\phi)$ for the centerline and introduce $\beta=\alpha-\phi$. Abbreviating derivatives with respect
to $s$ by a prime we transform~\eqref{E:Euler2:a} via 
$q=u-\delta u'/u-\kappa/\sqrt{u}$ into a first order system which we supply with boundary conditions at $s=0$ (nozzle) and $s=L$ (end point of the fiber):
\begin{subequations}
\label{E:Euler3}
\begin{align}
   \delta u' &= u \lb u-\frac{\kappa}{\sqrt{u}} - q\rb\;, 
	& u(0) &= 1\;,\\
   \eps^2 u q' &= r\cos \beta\;, 
	& q(L) &= u(L)-\frac{2\kappa}{\sqrt{u(L)}}\;,\\
   r' &= \cos\beta\;, & r(0) &= 1\;,\\
   \beta' &= -\frac{2}{\eps q} 
	- \lb \frac{r^2}{\eps^2 u q} +1\rb \frac{\sin\beta}{r}\;, 
	& \beta(0) &= 0\;.
\end{align}
\end{subequations}

\begin{rem}
The above boundary conditions for $u$, $r$ and $\beta$ given at $s=0$ express the fact that the fiber leaves the nozzle tangentially with a fixed velocity. Due to the boundary condition for $q$ at the end of the fiber, i.e.~at $s=L$, the viscous stresses at the end of the fiber balance the surface tension. Note, that in the inviscid case $\delta=0$ without surface tension, i.e.~$\kappa=0$, we have due to positive $u$ the equation $u(s)=q(s)$ for all $s$.
\end{rem}

\begin{lem}
\label{L:dq_pos}
In~\eqref{E:Euler3} it holds that $q'(0)>0$.
\end{lem}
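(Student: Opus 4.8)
The plan is to read $q'(0)$ straight off the governing equation for $q$ and to substitute the prescribed initial data. The second equation of the first order system~\eqref{E:Euler3} reads $\eps^2 u q' = r\cos\beta$, so wherever $u$ remains positive one may divide and solve for the derivative,
\begin{equation*}
  q'(s) = \frac{r(s)\cos\beta(s)}{\eps^2\, u(s)}\;.
\end{equation*}
It therefore suffices to evaluate this right-hand side at the nozzle $s=0$, where all three factors are pinned down by the boundary conditions.

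First I would record the relevant initial values from~\eqref{E:Euler3}, namely $u(0)=1$, $r(0)=1$ and $\beta(0)=0$. With these, the numerator becomes $r(0)\cos\beta(0)=1\cdot\cos 0=1$ and the denominator becomes $\eps^2 u(0)=\eps^2$, so that
\begin{equation*}
  q'(0)=\frac{1}{\eps^2}>0\;,
\end{equation*}
the strict inequality holding because $\eps>0$. This already establishes the claim.

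I do not expect a genuine obstacle here, since the statement reduces to a direct evaluation; the only two points worth spelling out are structural rather than difficult. The division by $u$ is legitimate in a neighbourhood of $s=0$ because $u(0)=1>0$ and $u$ is continuous, so $u$ stays positive near the nozzle. Secondly, the inequality is \emph{strict} precisely because $\beta(0)=0$ forces $\cos\beta(0)=1>0$ rather than merely $\ge 0$; together with the strictly positive factors $r(0)=1$ and $\eps^{2}$ this yields $q'(0)>0$ as asserted.
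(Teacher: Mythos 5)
Your proof is correct and coincides with the argument the paper implicitly relies on (the lemma is stated without proof there, being a direct evaluation): from $\eps^2 u q' = r\cos\beta$ and the nozzle data $u(0)=1$, $r(0)=1$, $\beta(0)=0$ one gets $q'(0)=1/\eps^2>0$. Your added remarks about positivity of $u$ near $s=0$ and strictness of the inequality are sound.
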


\begin{rem}
The observable $q$ can be interpreted as the internal energy of the fluid being the sum of the kinetic energy $Au^2=u$ and the viscous stress energy $-\delta Au'=-\delta u'/u$ and the potential energy $-\kappa \sqrt{A}=-\kappa/\sqrt{u}$ due to surface tension.
\end{rem}

\begin{rem}
\label{R:dphi_null}
The polar angle $\phi$ of the centerline is determined via
\begin{align}
   \phi' = \frac{\sin\beta}{r}\;, \qquad \phi(0) = 0
\end{align}
leading to $\phi'(0)=0$.
\end{rem}

\begin{rem}[Equations in Lagrangian coordinates]
For describing the parameter regimes for $\delta$, $\eps$ and $\kappa$ where physically relevant solutions of \eqref{E:Euler3} exist, it is convenient to introduce Lagrangian coordinates. The passage from Eulerian to Lagrangian coordinates is settled on the transformation
\begin{equation*}
   \frac{ds(t)}{dt} = u(s(t))>0, \qquad s(0) = 0
\end{equation*}
mapping the Euler position $s$ of a material point to the Lagrangian parameter $t$. Starting from the nozzle it takes a fluid particle $t$ time units to approach position $\gamma(s)$. It takes a particle $T$ time units to travel from the nozzle to the endpoint $s=L$.

After some elementary manipulations we obtain the Lagrangian system
\begin{subequations}
\label{E:Lagrange1}
\begin{align}
   \delta \dot{u}_L 
	&= u_L^2 \lb u_L-\frac{\kappa}{\sqrt{u_L}}-q_L\rb\;, 
		& u_L(0) &= 1\;,\\
   \eps^2 \dot{q}_L &= r_L\cos \beta_L\;, 
		& q_L(T) &= u_L(T)-\frac{2\kappa}{\sqrt{u_L(T)}}\;,\\
   \dot{r}_L &= u_L \cos\beta_L\;, & r_L(0) &= 1\;,\\
   \dot{\beta}_L &= -\frac{2 u_L}{\eps q_L} 
	- \lb \frac{r_L^2}{\eps^2 q_L} +u_L\rb \frac{\sin\beta_L}{r_L}\;,
	& \beta_L(0) &= 0\;.
\end{align}
\end{subequations}
Let $u_L(t)=u(s(t))$ be the velocity in the Lagrangian framework and mutatis mutandis for the other variables. By a dot, e.g.
\begin{equation*}
   \dot{u}_L=\frac{du_L(t)}{dt} 
	= \frac{d u(s(t))}{ds}\frac{ds(t)}{dt} = u' u\;,
\end{equation*}
we denote the derivative with respect to the Lagrangian parameter.
\end{rem}

\section{Inviscid limit $\delta=0$}
\label{S:Inviscid}

As outlined in the Introduction, we are interested in cases where $\delta$, $\eps$ and $\kappa$ are small. In this Section we shall exploit this fact in a formal manner. As small $\eps$ corresponds to a high rotation frequency one expects $u=O(1/\eps)$ at least in the fiber's interior. Due to the definition of $q$ one also expects $q=O(1/\eps)$. Therefore it is convenient to introduce the re--scaled variables
\begin{equation*}
   v = \eps u\; , \qquad w=\eps q\; .
\end{equation*}
In terms of $v,w,r$ and $\beta$ the model equations~\eqref{E:Euler3} become
\begin{subequations}
\label{E:Euler3s}
\begin{align}
   \delta\eps v' &= v \lb v-\frac{\lambda}{\sqrt{v}}- w\rb\;,
	& v(0) &=\eps\;,\\
   v w' &= r\cos \beta\;, 
	& v(L) &= w(L)+\frac{\lambda}{\sqrt{v(L)}}\;,\\
   r' &= \cos\beta\;, & r(0) &= 1\;,\\
   \beta' &= -\frac{2}{w} 
	- \lb \frac{r^2}{vw} +1\rb \frac{\sin\beta}{r}\;, 
	& \beta(0) &= 0\;,
\end{align}
\end{subequations}
where $\lambda=\eps^{3/2}\kappa$. In the ---formal--- inviscid limit $\delta=0$ the system~\eqref{E:Euler3s} together with the assumed positivity of $v$ yields

\begin{lem}
\label{L:ugleichq}
In the inviscid limit $\delta=0$ of the stationary system~\eqref{E:Euler3s} it holds that
\begin{equation*}
   w=v-\frac{\lambda}{\sqrt{v}} \quad \text{i.e.} \quad
	 q = u - \frac{\kappa}{\sqrt{u}}\;.
\end{equation*}
\end{lem}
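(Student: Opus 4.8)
The plan is to read off the claimed identity directly from the first equation of the rescaled stationary system~\eqref{E:Euler3s} after setting $\delta=0$, using only the assumed positivity of $v$. No differential information or interaction with the remaining three equations is required: in the inviscid limit the statement is a purely algebraic consequence of the first balance law.

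Concretely, I would put $\delta=0$ in
\[
   \delta\eps v' = v\lb v-\frac{\lambda}{\sqrt{v}}-w\rb\;,
\]
so that the left--hand side vanishes identically and hence so must the right--hand side. Since physically relevant solutions satisfy $v=\eps u>0$ throughout $[0,L]$, I may cancel the nonzero factor $v$, which leaves the pointwise relation $v-\lambda/\sqrt{v}-w=0$, that is $w=v-\lambda/\sqrt{v}$. This is the first asserted identity.

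To recover the equivalent statement in the original variables, I would substitute back $v=\eps u$, $w=\eps q$ and $\lambda=\eps^{3/2}\kappa$. The surface--tension contribution becomes $\lambda/\sqrt{v}=\eps^{3/2}\kappa/\sqrt{\eps u}=\eps\kappa/\sqrt{u}$, so the relation reads $\eps q=\eps u-\eps\kappa/\sqrt{u}$; dividing through by $\eps>0$ yields $q=u-\kappa/\sqrt{u}$, as claimed.

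There is essentially no genuine obstacle: the only point requiring care is the legitimacy of cancelling $v$, which is guaranteed by the positivity built into the notion of a physically relevant solution. As a consistency check it is worth noting that the endpoint condition $v(L)=w(L)+\lambda/\sqrt{v(L)}$ is precisely the derived relation evaluated at $s=L$, so in the inviscid limit the boundary datum imposes no additional constraint.
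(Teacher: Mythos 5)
Your proposal is correct and is exactly the argument the paper intends: setting $\delta=0$ in the first equation of~\eqref{E:Euler3s}, cancelling the positive factor $v$, and undoing the rescaling $v=\eps u$, $w=\eps q$, $\lambda=\eps^{3/2}\kappa$ (the paper compresses this into the sentence preceding the lemma, invoking the system ``together with the assumed positivity of $v$''). Nothing is missing; the algebra $\lambda/\sqrt{v}=\eps\kappa/\sqrt{u}$ is right, and the formal nature of the limit is handled the same way in both arguments.
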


In the inviscid limit $\delta=0$, we can eliminate $w$ in the above system~\eqref{E:Euler3s} and obtain the reduced system
\begin{subequations}
\label{E:Euler:delta0}
\begin{align}
   v'	&= \lb v +\frac{\lambda}{2\sqrt{v}}\rb^{-1} 
		r\, \cos\beta\;,
	& v(0) &= \eps\;,\\
   r' &= \cos\beta\;, & r(0) &= 1\;, \\
   \beta' &= - \frac{2}{v-\frac{\lambda}{\sqrt{v}}} - 
	\lb \frac{r^2}{v^2-\lambda \sqrt{v}}+1\rb
		\frac{\sin\beta}{r}\;, 
	& \beta(0) &= 0\;.
\end{align}
\end{subequations}
This system can be solved numerically. The Figures~\ref{F:rb_invis} and ~\ref{F:sv_invis} show simulations for $\eps=0.16$ and $\kappa=0.5$ leading to $\lambda=0.032$ compared to the results without surface tension, i.e.~$\kappa=\lambda=0$.
\begin{figure}
\begin{minipage}[t]{.45\textwidth}
\vspace*{0pt}\par
\includegraphics[width=\textwidth]{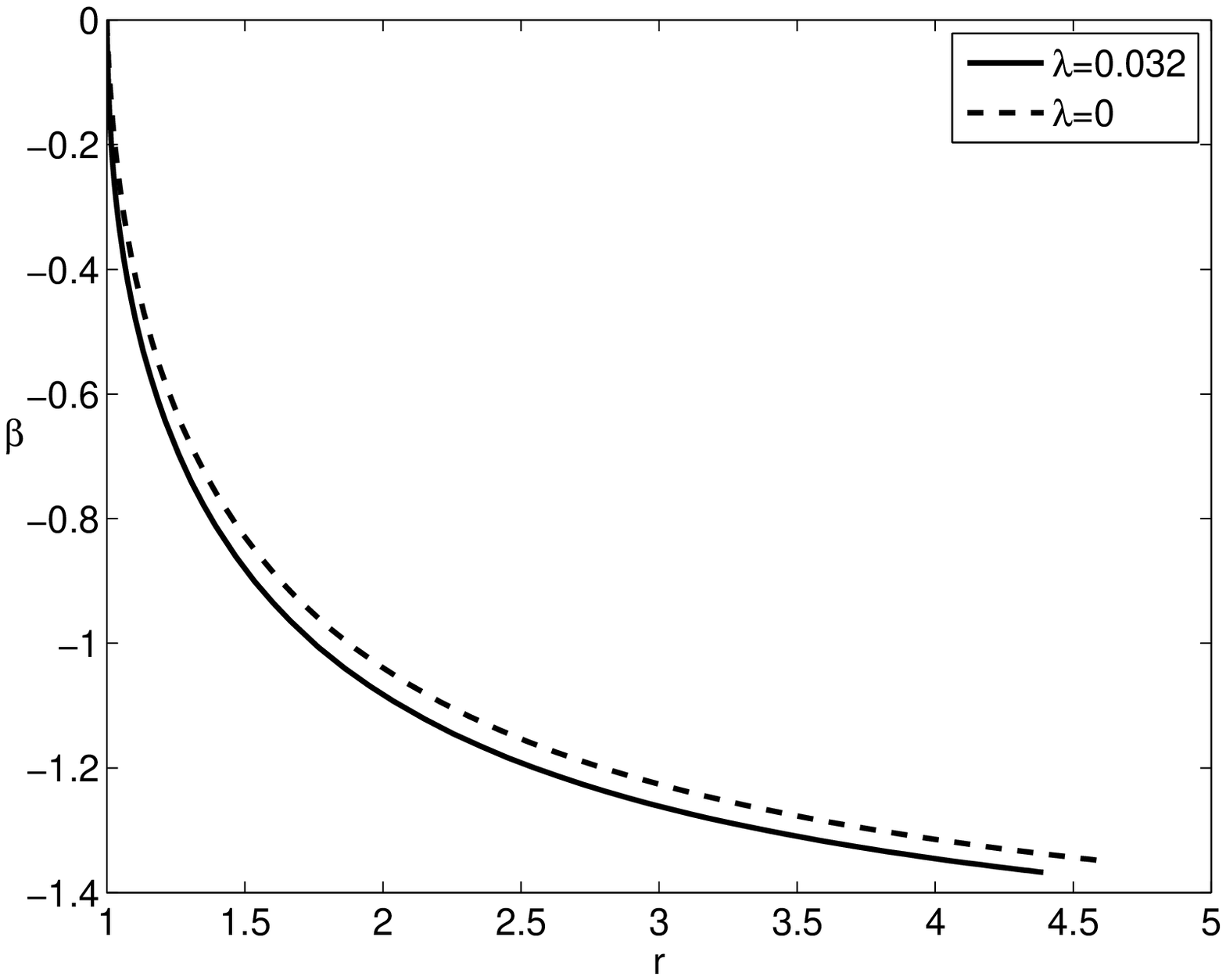}
\caption{\label{F:rb_invis} Phaseportrait of the system~\eqref{E:Euler:delta0} in the $r\beta$--plane.}
\end{minipage}
\hfill
\begin{minipage}[t]{.45\textwidth}
\vspace*{0pt}\par
\includegraphics[width=\textwidth]{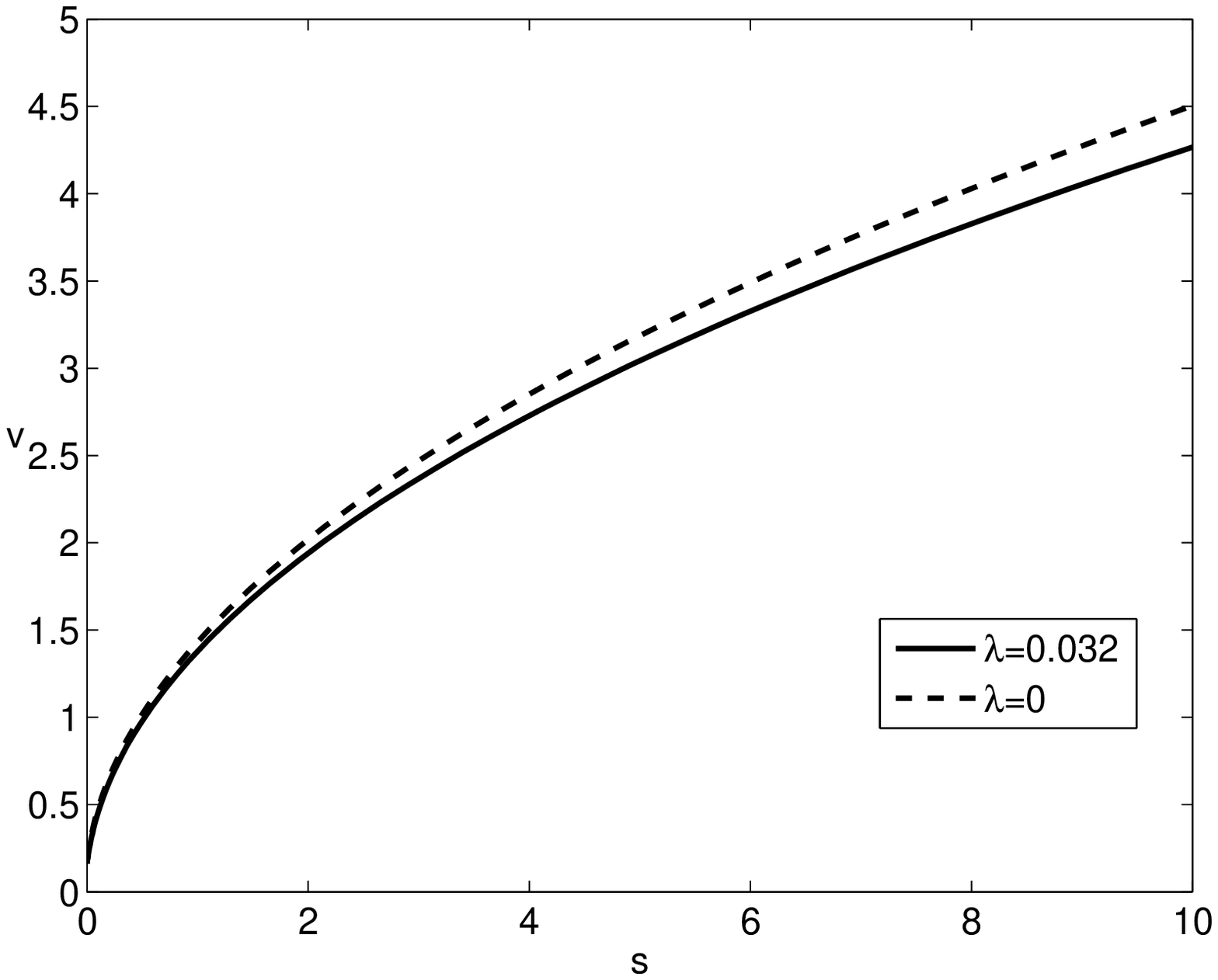}
\caption{\label{F:sv_invis} Velocity $v=\eps u$ of an inviscid fiber with and without surface tension.}
\end{minipage}
\end{figure}
The numerical results show, that including surface tension to the inviscid model alters the behaviour of the fiber only marginally.

\begin{lem}
\label{L:conc}
For $0<\eps<\dfrac{\sqrt{1-\kappa/4}}{1+\kappa/2}$ it holds that
\begin{equation}
   u''(0) = \frac{\eps^2\lb 1+\frac{\kappa}{2}\rb^2+\frac{\kappa}{4}-1}{\eps^4\lb 1+\frac{\kappa}{2}\rb^3}<0\;.
\end{equation}
\end{lem}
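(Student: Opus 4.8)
The plan is to reduce everything to a direct evaluation at $s=0$ of the reduced inviscid system~\eqref{E:Euler:delta0}. Since the rescaling is $v=\eps u$, derivatives in $s$ satisfy $u''(0)=v''(0)/\eps$, so it suffices to compute $v'(0)$ and $v''(0)$. At the nozzle the boundary data are $v(0)=\eps$, $r(0)=1$, $\beta(0)=0$, hence $\cos\beta(0)=1$ and, crucially, $\sin\beta(0)=0$.

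First I would read off $v'(0)$ from the first equation of~\eqref{E:Euler:delta0}. Using $\lambda=\eps^{3/2}\kappa$ one gets the simplification $\lambda/(2\sqrt{\eps})=\eps\kappa/2$, so the denominator $v+\lambda/(2\sqrt{v})$ evaluates to $\eps\lb 1+\frac{\kappa}{2}\rb$ at $s=0$, giving $v'(0)=\lsb\eps\lb 1+\frac{\kappa}{2}\rb\rsb^{-1}$.

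Next I would differentiate the $v$-equation once more. Writing $v'=r\cos\beta/g(v)$ with $g(v)=v+\frac{\lambda}{2}v^{-1/2}$, the quotient rule gives $v''=\lsb (r\cos\beta)'\,g(v)-r\cos\beta\,g'(v)\,v'\rsb/g(v)^2$. The only point requiring care is the term $(r\cos\beta)'=r'\cos\beta-r\sin\beta\,\beta'$: although $\beta'(0)$ is large (it behaves like $-2/[\eps(1-\kappa)]$), it enters only through the factor $\sin\beta(0)=0$ and therefore drops out, so no delicate limit is needed. With $g'(\eps)=1-\kappa/4$ and the values already computed, assembling the pieces yields $v''(0)=\lb G^2-(1-\kappa/4)\rb/G^3$ with $G=\eps\lb 1+\frac{\kappa}{2}\rb$, and dividing by $\eps$ reproduces the stated expression for $u''(0)$.

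Finally, for the sign I would note that the denominator $\eps^4\lb 1+\frac{\kappa}{2}\rb^3$ is strictly positive, so $u''(0)<0$ is equivalent to the numerator being negative, i.e.\ $\eps^2\lb 1+\frac{\kappa}{2}\rb^2<1-\frac{\kappa}{4}$. Taking square roots (legitimate since $1-\kappa/4>0$ under the hypothesis) gives exactly the assumed bound $\eps<\sqrt{1-\kappa/4}/\lb 1+\frac{\kappa}{2}\rb$. The whole argument is an elementary computation; the only thing to watch is the vanishing of the $\beta'$ contribution, which is what makes the second derivative well defined and computable at the nozzle.
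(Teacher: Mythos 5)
Your proposal is correct and follows essentially the same route as the paper: differentiating the first equation of~\eqref{E:Euler:delta0} via the quotient rule, observing that the $\beta'$ contribution is annihilated by the factor $\sin\beta(0)=0$, and evaluating at the nozzle with $g(\eps)=\eps\lb 1+\frac{\kappa}{2}\rb$ and $g'(\eps)=1-\frac{\kappa}{4}$ to obtain $v''(0)$, then dividing by $\eps$. Your treatment is in fact slightly more careful than the paper's, which states the formula for $v''$ and the value $v''(0)$ without spelling out these intermediate evaluations or the sign discussion.
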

\begin{proof}
Considering the system~\eqref{E:Euler:delta0}, the expression for $v''$ reads as
\begin{align*}
   v'' &= \frac{r'\cos \beta - r\beta'\sin\beta}{v+\frac{\lambda}{2\sqrt{v}}} - r v'\cos\beta \frac{1-\frac{\lambda}{4\sqrt{v}^3}}{\lb v+\frac{\lambda}{2\sqrt{v}}\rb^2}
\intertext{and using the boundary conditions at $s=0$, we obtain}
   v''(0) &= \frac{\eps^2\lb 1+\frac{\kappa}{2}\rb^2+\frac{\kappa}{4}-1}{\eps^3\lb 1+\frac{\kappa}{2}\rb^3}\;.
\end{align*}
\end{proof}

Lemma~\ref{L:conc} shows, that in the inviscid, fast rotating limit the velocity in Eulerian coordinates is locally at the nozzle a concave function.

\begin{lem}\label{L:conv}
For the second derivative of the Lagrangian velocity $u_L$ it holds that
\begin{equation}
   \ddot{u}_L(0) >0\;.
\end{equation}
\end{lem}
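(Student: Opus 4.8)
The plan is to reduce the claim to the already-computed Eulerian quantity $u''(0)$ from Lemma~\ref{L:conc}, by converting the Lagrangian second derivative at the nozzle into Eulerian data through the chain rule, and then to check positivity by an elementary algebraic simplification.

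First I would exploit the change of variables recorded in the remark on Lagrangian coordinates, namely $\dot{u}_L = u' u$ (where a prime denotes the $s$-derivative and $\dot{s} = u$). Differentiating once more in $t$ and again using $\dot{s} = u$ gives $\ddot{u}_L = u'' u^2 + \lb u'\rb^2 u$. Evaluating at the nozzle, where $u(0) = 1$, this collapses to the clean identity $\ddot{u}_L(0) = u''(0) + \lb u'(0)\rb^2$. Thus the whole statement hinges on comparing $u''(0)$, which is negative by Lemma~\ref{L:conc}, against the nonnegative term $\lb u'(0)\rb^2$.

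Next I would compute $u'(0)$ in the inviscid limit directly from system~\eqref{E:Euler:delta0}. Since $v = \eps u$, and at $s=0$ we have $v(0) = \eps$, $r(0) = 1$, $\beta(0) = 0$, the first equation gives $v'(0) = \lb \eps + \frac{\lambda}{2\sqrt{\eps}}\rb^{-1}$; inserting $\lambda = \eps^{3/2}\kappa$ yields $\frac{\lambda}{2\sqrt{\eps}} = \frac{\eps\kappa}{2}$, hence $v'(0) = \frac{1}{\eps\lb 1+\frac{\kappa}{2}\rb}$ and therefore $u'(0) = v'(0)/\eps = \frac{1}{\eps^2\lb 1+\frac{\kappa}{2}\rb}$, so that $\lb u'(0)\rb^2 = \frac{1}{\eps^4\lb 1+\frac{\kappa}{2}\rb^2}$.

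Finally, substituting the value of $u''(0)$ from Lemma~\ref{L:conc} together with $\lb u'(0)\rb^2$ into the identity and combining over the common denominator $\eps^4\lb 1+\frac{\kappa}{2}\rb^3$, the numerator reduces to $\eps^2\lb 1+\frac{\kappa}{2}\rb^2 + \frac{\kappa}{4} - 1 + \lb 1 + \frac{\kappa}{2}\rb = \eps^2\lb 1+\frac{\kappa}{2}\rb^2 + \frac{3\kappa}{4}$, giving $\ddot{u}_L(0) = \frac{\eps^2\lb 1+\frac{\kappa}{2}\rb^2 + \frac{3\kappa}{4}}{\eps^4\lb 1+\frac{\kappa}{2}\rb^3}$, which is manifestly strictly positive for $\eps>0$ and $\kappa\ge 0$. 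There is no real analytic obstacle here; the only point worth emphasizing is the sign reversal between the two descriptions: even though the velocity is locally concave in arc length ($u''(0)<0$), the chain-rule contribution $\lb u'(0)\rb^2$ dominates $\abs{u''(0)}$, so the velocity is locally convex in the Lagrangian parameter.
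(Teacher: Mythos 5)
Your proof is correct and arrives at exactly the paper's closed form
\begin{equation*}
   \ddot{u}_L(0)=\frac{\eps^2\lb 1+\frac{\kappa}{2}\rb^2+\frac{3\kappa}{4}}{\eps^4\lb 1+\frac{\kappa}{2}\rb^3}>0\;,
\end{equation*}
but by a genuinely different route. The paper never leaves the Lagrangian framework: it combines Lemma~\ref{L:ugleichq} with the system~\eqref{E:Lagrange1} at $\delta=0$ to write $\eps^2\dot{u}_L=r_L\cos\beta_L\lb 1+\frac{\kappa}{2u_L^{3/2}}\rb^{-1}$, differentiates this ODE once more in $t$, and evaluates at $t=0$ using $u_L(0)=r_L(0)=1$, $\beta_L(0)=0$, $\dot{r}_L(0)=1$. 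You instead pull the claim back to Eulerian data through the kinematic identity $\ddot{u}_L=u''u^2+(u')^2u$, recycle the value of $u''(0)$ from Lemma~\ref{L:conc}, and compute $u'(0)=\frac{1}{\eps^2\lb 1+\frac{\kappa}{2}\rb}$ directly from~\eqref{E:Euler:delta0}. Your approach buys two things: economy (no second differentiation of an ODE is needed, since Lemma~\ref{L:conc} already did the work) and transparency about the sign reversal between the two descriptions---it exhibits precisely why local concavity in $s$ coexists with local convexity in $t$, namely because $u'(0)^2$ outweighs $\abs{u''(0)}$. It also unifies this lemma with the paper's later reasoning: the identity $\ddot{u}_L(0)=u''(0)u(0)^2+u'(0)^2u(0)$ is exactly the one the paper invokes in the proof of Theorem~\ref{T:Bound} for the viscous case. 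One small point of care: Lemma~\ref{L:conc} asserts its formula for $u''(0)$ only under the hypothesis $0<\eps<\sqrt{1-\kappa/4}\,/\lb 1+\frac{\kappa}{2}\rb$, whereas Lemma~\ref{L:conv} carries no restriction on $\eps$; since the derivation of that formula in the proof of Lemma~\ref{L:conc} never uses this bound (it is needed only to conclude $u''(0)<0$, a sign you do not rely on), you should add one sentence observing that the formula is valid for all $\eps>0$, after which your argument covers the full range of parameters.
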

\begin{proof}
Using~\eqref{E:Lagrange1} with $\delta =0$, we get
\begin{equation*}
   \eps^2 \dot{u}_L = r_L \cos\beta_L 
		\lb 1+\frac{\kappa}{2u_L^{3/2}}\rb^{-1}\;.
\end{equation*}
Differentiating again and inserting the conditions at $s=0$ yields
\begin{equation*}
   \ddot{u}_L(0) = \frac{1}{\lb 1+\frac{\kappa}{2}\rb^3}
	\lsb \frac{\lb 1 +\frac{\kappa}{2}\rb^2}{\eps^2} + \frac{3\kappa}{4\eps^4}\rsb >0\;.
\end{equation*}
\end{proof}

Lemma~\ref{L:conv} shows, that in the Lagrangian framework of the inviscid limit $\delta=0$, the velocity is a convex function at least close to the nozzle. Note, that $\ddot{u}> 0$ implies that the acceleration of a fluid particle increases with its flight time. This agrees with the fact, that the acting forces, i.e.~the centrifugal and Coriolis forces, also increase the further outward the fluid particle travels.

\section{The viscous case}
\label{S:Viscous}

After the investigation of the inviscid case $\delta=0$, we turn our attention to the boundary value problem (BVP)~\eqref{E:Euler3} governing the stationary viscous case
\begin{subequations}
\label{E:Euler4}
\begin{align}
   \delta u' &= u\lb u-\frac{\kappa}{\sqrt{u}}-q\rb\;, 
	& u(0) &= 1\;, \label{E:Euler4:a}\\
   \eps^2 u q' &= r\cos\beta\;, 
	& q(L) &= u(L)-\frac{2\kappa}{\sqrt{u(L)}}\;, \label{E:Euler4:b}\\
   r' &= \cos \beta\;, & r(0) &= 1\;, \label{E:Euler4:c}\\
   q \beta' &= -\frac{2}{\eps} - \lb \frac{r^2}{\eps^2 u}+q\rb \frac{\sin\beta}{r}\;, & \beta(0) &= 0\;. \label{E:Euler4:d}
\end{align}
\end{subequations}
To gain some insight into the behaviour of the solutions for small values of the parameters $\delta$, $\eps$ and $\kappa$, we carry out numerical simulations. The results are obtained with \emph{Matlab$^\circledR$ 2008a} using the routine \texttt{bvp5c}, a finite difference code that implements the four--stage Lobatto IIIa formula. This is a collocation formula and the collocation polynomial provides a $C^1$--continuous solution that is fifth--order accurate uniformly on the considered interval. The formula is implemented as an implicit Runge--Kutta formula and solves the algebraic equations directly. As an initial guess for the solution we use the inviscid case $\delta=0$. 

\begin{table}[h]
\begin{center}
\begin{tabular}{llll|llll}
\multicolumn{4}{c|}{$\eps=0.25, \kappa=0.1$}
&\multicolumn{4}{c}{$\eps=0.1, \delta=0.01$} \\
$\delta$	& $q(0)$	& $u''(0)$	& $\ddot{u}(0)$ 	&
$\kappa$	& $q(0)$	& $u''(0)$	& $\ddot{u}(0)$ \\
\hline
0.1	& 0.1252 	& -4.8225	& 55.221	&
0.3	& 0.1535	& -208.43	& 2797.9 \\	
0.125	& 0.0239	& -3.7310	& 45.307	&
0.4	& 0.0653	& -194.37	& 2681.0 \\
0.13	& 0.0092	& -3.5535 	& 43.348	&
0.45	& 0.0220	& -168.59	& 2617.5 \\
0.133	& 0.0021	& -2.2389	& 43.288	&
0.475	& 0.0013		& -149.47	& 2592.1 \\
0.135	& \multicolumn{3}{c|}{\emph{no convergence}} &
0.48	& \multicolumn{3}{c}{\emph{no convergence}}\\[1ex]
\end{tabular}
\caption{\label{T:Results} Results of the numerical simulations using Matlab$^\circledR$.}
\end{center}
\end{table}

The numerical results, some of them are given in Table~\ref{T:Results}, as well as the analytical results for the inviscid case (see Lemmata~\ref{L:ugleichq},~\ref{L:conc} and~\ref{L:conv}) show, that a  solution of the stationary system~\eqref{E:Euler4}, if it exists at all, has the following properties:
\begin{defi}[Physically relevant stationary solutions]
\label{D:PRS}
We call a solution $(u,q,r,\beta)$ of the boundary value problem~\eqref{E:Euler4} a \emph{physically relevant stationary solution}, if it has the following properties
\begin{align}
   q_0 :=q(0) & \in (0,1-\kappa]\;, \label{P:q0} \tag{P1} \\
   u''(0) &< 0\;, \label{P:u_concave} \tag{P2} \\
   \ddot{u}_L(0) &> 0\;. \label{P:uL_convex} \tag{P3}
\end{align}
\end{defi}

\begin{rem}In the sense of this definition, all the numerical solutions we were able to compute are physically relevant; the properties~\eqref{P:q0},~\eqref{P:u_concave} and~\eqref{P:uL_convex} are satiesfied. Furthermore, the stationary solution of the inviscid case, i.e.~the solution of~\eqref{E:Euler:delta0} together with the according solutions for $u$ and $q$ is physically relevant.
\end{rem}

Let us discuss the properties~\eqref{P:q0}---\eqref{P:uL_convex} in detail.

\begin{lem}
\label{L:equivcond}
Let $(u,q,r,\beta)$ be a solution of the BVP~\eqref{E:Euler4}.
Then, property~\eqref{P:q0} is equivalent to $u'(0)\ge 0$ and $\beta'(0) < 0$.
\end{lem}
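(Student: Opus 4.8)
The plan is to read off both quantities $u'(0)$ and $\beta'(0)$ directly from the differential equations \eqref{E:Euler4:a} and \eqref{E:Euler4:d} by evaluating them at the nozzle $s=0$, where the boundary data $u(0)=1$, $r(0)=1$ and $\beta(0)=0$ collapse each right-hand side to a purely algebraic expression in $q_0$. The whole statement then reduces to two elementary sign equivalences: one controlling the upper bound $q_0\le 1-\kappa$ and one controlling the lower bound $q_0>0$.

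For the upper bound I would evaluate \eqref{E:Euler4:a} at $s=0$. Using $u(0)=1$ gives $\delta u'(0)=1-\kappa-q_0$. Since we are in the viscous regime $\delta>0$, the left and right sides differ only by the positive factor $\delta$, so the sign of $u'(0)$ agrees with the sign of $1-\kappa-q_0$; hence $u'(0)\ge 0$ holds if and only if $q_0\le 1-\kappa$.

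For the lower bound I would evaluate \eqref{E:Euler4:d} at $s=0$. Because $\beta(0)=0$ we have $\sin\beta(0)=0$, so the entire bracketed term drops out and the equation becomes $q_0\,\beta'(0)=-2/\eps$. With $\eps>0$ this yields $\beta'(0)=-2/(\eps q_0)$, whose sign is opposite to that of $q_0$; thus $\beta'(0)<0$ if and only if $q_0>0$. The one point that needs care here is the degenerate value $q_0=0$: then the identity $q_0\beta'(0)=-2/\eps$ cannot be satisfied for any finite $\beta'(0)$, so no solution of the BVP can have $q_0=0$, and demanding that $\beta'(0)$ be a well-defined negative number is precisely what enforces the strict inequality $q_0>0$.

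Combining the two equivalences, $u'(0)\ge 0$ together with $\beta'(0)<0$ is equivalent to $0<q_0\le 1-\kappa$, i.e.\ to $q_0\in(0,1-\kappa]$, which is exactly property \eqref{P:q0}. I expect no genuine obstacle beyond the bookkeeping of the $q_0=0$ degeneracy; the substance of the argument is entirely the observation that at $s=0$ the trigonometric nonlinearity in \eqref{E:Euler4:d} vanishes, leaving a clean linear relation between $\beta'(0)$ and $q_0$, while \eqref{E:Euler4:a} similarly linearizes through the normalization $u(0)=1$.
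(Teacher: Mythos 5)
Your proof is correct and follows essentially the same route as the paper: evaluate~\eqref{E:Euler4:a} and~\eqref{E:Euler4:d} at $s=0$, use the boundary data $u(0)=1$, $r(0)=1$, $\beta(0)=0$ to obtain $\delta u'(0)=1-\kappa-q_0$ and $\beta'(0)=-2/(\eps q_0)$, and read off the two sign equivalences. Your additional remark ruling out the degenerate case $q_0=0$ is a small extra care the paper leaves implicit, but it does not change the argument.
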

\begin{proof}Eqn.~\eqref{E:Euler4:a} reads as
\begin{align*}
   u'(0) = \frac{u}{\delta} \lb u-\frac{\kappa}{\sqrt{u}}- q\rb\Big|_0 
	= \frac{1-\kappa-q_0}{\delta}
\intertext{and hence}
   u'(0)\ge 0 \quad \iff q_0 & \le 1-\kappa\;.
\end{align*}
Due to~\eqref{E:Euler4:d}
\begin{equation*}
   \beta'(0) = -\frac{2}{\eps q_0}
\end{equation*}
holds and $\beta'(0)<0$ if and only if $q_0>0$.
\end{proof}

\begin{rem}
\label{R:InterpretP1}
The conditions $u'(0)\ge 0$ and $\beta'(0)<0$ allow for an easy interpretation in physical terms:\\
The condition $u'(0)\ge 0$ assures, that the fiber is not accelerated \emph{into} the nozzle---a fact that one should expect from a physically relevant solution.\\
Due to Remark~\ref{R:dphi_null} we observe that the condition $\beta'(0)<0$ is equivalent to  $\alpha'(0)=\beta'(0)-\phi'(0) = \beta'(0)<0$. Hence the fiber initially bends \emph{in} the sense of rotation of the drum. The opposite condition $\alpha'(0)>0$ would imply, that the fiber initially bends \emph{against} the direction of rotation of the drum. This is obviously counterintuitive and \emph{physically not reasonable}.
\end{rem}

\begin{rem}
The property~\eqref{P:q0} also implies, that physically relevant solutions require $0\le \kappa<1$.
\end{rem}

\begin{lem}
Let $(u,q,r,\beta)$ be a stationary solution of the BVP~\eqref{E:Euler4}. Then, the property~\eqref{P:q0} implies
\begin{equation*}
   \delta^2 u''(0) < \lb 2-\frac{\kappa}{2}\rb \lb 1-\kappa\rb - \frac{\delta}{\eps^2}\;.
\end{equation*}
In other words: For all $\delta>0$ and $\eps^2<\dfrac{\delta}{\lb 2-\frac{\kappa}{2}\rb \lb 1-\kappa\rb}$: \eqref{P:q0} $\Longrightarrow$~\eqref{P:u_concave}.
\end{lem}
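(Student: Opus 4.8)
The plan is to produce an explicit closed form for $u''(0)$ straight from the system~\eqref{E:Euler4} and then bound its single free parameter $q_0$ using~\eqref{P:q0}. First I would record the two first derivatives at the nozzle. From~\eqref{E:Euler4:a} with $u(0)=1$ one has, exactly as in Lemma~\ref{L:equivcond},
\begin{equation*}
   u'(0) = \frac{1-\kappa-q_0}{\delta}\;,
\end{equation*}
while~\eqref{E:Euler4:b} with $r(0)=1$, $\beta(0)=0$ and $u(0)=1$ gives $q'(0)=1/\eps^2$.

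Next I would differentiate~\eqref{E:Euler4:a} with respect to $s$. Writing its right-hand side as $u^2-\kappa\sqrt{u}-uq$ and differentiating yields
\begin{equation*}
   \delta u'' = u'\lb 2u-\frac{\kappa}{2\sqrt{u}}-q\rb - u q'\;.
\end{equation*}
Evaluating at $s=0$, inserting $u(0)=1$ together with the two boundary derivatives above, and multiplying through by $\delta$, I obtain the closed form
\begin{equation*}
   \delta^2 u''(0) = \lb 1-\kappa-q_0\rb\lb 2-\frac{\kappa}{2}-q_0\rb - \frac{\delta}{\eps^2}\;.
\end{equation*}

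Finally I would compare this with the claimed bound. Setting $a=1-\kappa$ and $b=2-\frac{\kappa}{2}$, the asserted strict inequality is equivalent to $(a-q_0)(b-q_0)<ab$, which after expanding reduces to $q_0\lb q_0-(a+b)\rb<0$ with $a+b=3-\tfrac{3}{2}\kappa$. Property~\eqref{P:q0} supplies $q_0>0$ and $q_0\le 1-\kappa<3-\tfrac{3}{2}\kappa$, so the two factors carry opposite signs and the inequality is strict. The supplementary ``in other words'' statement is then immediate: when $\eps^2<\dfrac{\delta}{\lb 2-\frac{\kappa}{2}\rb\lb 1-\kappa\rb}$ the quantity $\lb 2-\frac{\kappa}{2}\rb\lb 1-\kappa\rb-\dfrac{\delta}{\eps^2}$ is negative, hence $\delta^2u''(0)<0$ and, as $\delta>0$, property~\eqref{P:u_concave} follows.

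There is essentially no analytic obstacle here: the whole argument is a single differentiation of~\eqref{E:Euler4:a} followed by an elementary factorisation, and no behaviour of the solution away from $s=0$ enters. The only point needing a moment's care is that the inequality must remain \emph{strict} even at the right endpoint $q_0=1-\kappa$ permitted by~\eqref{P:q0}; this is guaranteed because $3-\tfrac{3}{2}\kappa-(1-\kappa)=2-\tfrac{\kappa}{2}>0$ throughout the admissible range $0\le\kappa<1$, so $q_0$ stays strictly below $a+b$ and the factor $q_0-(a+b)$ never vanishes.
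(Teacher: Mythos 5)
Your proposal is correct and follows essentially the same route as the paper: differentiate~\eqref{E:Euler4:a}, evaluate at $s=0$ using $u'(0)=\frac{1-\kappa-q_0}{\delta}$ and $q'(0)=1/\eps^2$ to obtain $\delta^2u''(0)=\lb 1-\kappa-q_0\rb\lb 2-\frac{\kappa}{2}-q_0\rb-\frac{\delta}{\eps^2}$, then bound the product using~\eqref{P:q0}. Your explicit factorisation $q_0\lb q_0-(a+b)\rb<0$ is a slightly more careful justification of the final inequality (covering the endpoint $q_0=1-\kappa$ cleanly) than the paper, which simply asserts that~\eqref{P:q0} yields the bound, but it is the same argument in substance.
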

\begin{proof}
Via differentiation we deduce from~\eqref{E:Euler4} the equation
\begin{equation*}
   \delta^2 u'' = u'\lb u-\frac{\kappa}{\sqrt{u}}-q\rb 
	+ u \lb u'+\frac{\kappa u'}{2\sqrt{u}^3}-q'\rb 
	= \frac{1}{\delta} \lb 2u-\frac{\kappa}{2\sqrt{u}}-q\rb
	\lb u-\frac{\kappa}{\sqrt{u}}-q\rb - \frac{r\cos\beta}{\eps^2}
\;.
\end{equation*}
As a consequence we obtain at the nozzle, i.e.~for $s=0$
\begin{equation}
   \label{E:aux2}
   \delta^2 u''(0) = \lb 2-\frac{\kappa}{2}-q_0\rb 
	\lb 1-\kappa-q_0\rb - \frac{\delta}{\eps^2}\;,
\end{equation}
and the property~\eqref{P:q0} yields
\begin{equation*}
   \delta^2 u''(0) < \lb 2-\frac{\kappa}{2}\rb \lb 1-\kappa\rb - \frac{\delta}{\eps^2}\;.
\end{equation*}
\end{proof}

This results shows, that $u''(0) < 0$ for all $\delta$ and $\eps$ with $2\eps^2 <\dfrac{\delta}{\lb 2-\frac{\kappa}{2}\rb\lb 1-\kappa\rb}$. As this statement holds for fixed $\delta$ for all ``small'' parameters $\eps$, we can deduce~\eqref{P:u_concave}  rigorously from~\eqref{P:q0} in the case of small $\eps$ and small $\kappa$. In particular, for fixed $\delta$, the property~\eqref{P:u_concave} does not result in a restiction on the values of $\eps$ and $\kappa$ as $\eps$ and $\kappa$ tend to $0$.

The property~\eqref{P:uL_convex} of a physically relevant solution is required due to numerical evidence and analogy to the inviscid case,. See also the arguments at the end of section \ref{S:Inviscid}.
It will lead us now to bounds for $q_0$ and finally to conditions on $\kappa$, $\eps$ and $\delta$. 

\begin{thm} \label{T:Bound}
Let $(u,q,r,\beta)$ be a physically relevant stationary solution of the BVP~\eqref{E:Euler4}. Then, the following estimate for the initial value $q_0$ holds:
\begin{equation*}
   \max \lb 0, \underline{q_0}\rb \le q_0 
	\le \min \lb 1, \overline{q_0}\rb
\end{equation*}
where
\begin{align*}
   \underline{q_0} &= \frac{3-\frac{3\kappa}{2}-\sqrt{\lb 1+\frac{\kappa}{2}\rb^2+4\delta\eps^{-2}}}{2}\;, \\
   \overline{q_0} &= \frac{5-\frac{7\kappa}{2}-\sqrt{\lb 1+\frac{\kappa}{2}\rb^2+8\delta\eps^{-2}}}{4}\;.
\end{align*}
\end{thm}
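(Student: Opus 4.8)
The plan is to convert each of the three defining properties of a physically relevant solution into an explicit inequality for $q_0$ and then intersect the resulting admissible intervals. Property~\eqref{P:q0} immediately delivers $q_0>0$ and $q_0\le 1-\kappa\le 1$, which account for the $\max\lb 0,\cdot\rb$ and $\min\lb 1,\cdot\rb$ in the statement; it also records the sign information $1-\kappa-q_0\ge 0$ needed to select the correct branch of the two quadratics below. It then remains to produce $\underline{q_0}$ from~\eqref{P:u_concave} and $\overline{q_0}$ from~\eqref{P:uL_convex}.

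For the lower bound I would start from the already-derived identity~\eqref{E:aux2},
\begin{equation*}
   \delta^2 u''(0) = \lb 2-\frac{\kappa}{2}-q_0\rb\lb 1-\kappa-q_0\rb - \frac{\delta}{\eps^2}\;,
\end{equation*}
and impose~\eqref{P:u_concave}, i.e.\ $u''(0)<0$. This is an upward-opening quadratic inequality in $q_0$, so $q_0$ must lie strictly between its roots; a direct computation shows the discriminant collapses to $\lb 1+\frac{\kappa}{2}\rb^2+4\delta\eps^{-2}$, whence the smaller root is exactly $\underline{q_0}$ and $q_0>\underline{q_0}$ follows. (The larger root produces a further, weaker, upper bound that will be subsumed by the one from~\eqref{P:uL_convex}.)

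For the upper bound the key preliminary step---and the one demanding the most care---is to evaluate $\ddot{u}_L(0)$ from the Lagrangian system~\eqref{E:Lagrange1}. Differentiating $\delta\dot u_L=u_L^2\lb u_L-\kappa/\sqrt{u_L}-q_L\rb$ once more, eliminating the factor $u_L-\kappa/\sqrt{u_L}-q_L$ via the equation itself, and inserting the data $u_L(0)=r_L(0)=1$, $\beta_L(0)=0$, $q_L(0)=q_0$ together with $\dot q_L(0)=\eps^{-2}$ and $\delta\dot u_L(0)=1-\kappa-q_0$, I expect the identity
\begin{equation*}
   \delta^2\ddot{u}_L(0) = 2\lb 1-\kappa-q_0\rb^2 + \lb 1+\frac{\kappa}{2}\rb\lb 1-\kappa-q_0\rb - \frac{\delta}{\eps^2}\;.
\end{equation*}
Writing $x=1-\kappa-q_0\ge 0$, property~\eqref{P:uL_convex} becomes $2x^2+\lb 1+\frac{\kappa}{2}\rb x-\delta\eps^{-2}>0$; since $x\ge 0$ and the parabola opens upward, this forces $x$ to exceed the unique positive root $\tfrac14\lsb-\lb1+\frac{\kappa}{2}\rb+\sqrt{\lb1+\frac{\kappa}{2}\rb^2+8\delta\eps^{-2}}\rsb$. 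Translating back through $q_0=1-\kappa-x$ and simplifying the affine part $1-\kappa+\tfrac14\lb1+\frac{\kappa}{2}\rb=\tfrac14\lb 5-\frac{7\kappa}{2}\rb$ yields precisely $q_0<\overline{q_0}$.

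Combining $q_0>0$ and $q_0>\underline{q_0}$ from the lower analysis with $q_0\le 1-\kappa\le 1$ and $q_0<\overline{q_0}$ from the upper analysis gives the asserted two-sided estimate (with the non-strict inequalities as stated). The main obstacle is the bookkeeping in the computation of $\ddot u_L(0)$---in particular resisting the temptation to reuse the inviscid formula of Lemma~\ref{L:conv}, which was derived under $\delta=0$ and does \emph{not} apply here---and verifying that the two discriminants reduce to the clean forms $\lb 1+\frac{\kappa}{2}\rb^2+4\delta\eps^{-2}$ and $\lb 1+\frac{\kappa}{2}\rb^2+8\delta\eps^{-2}$; these simplifications are the real content that makes the roots coincide with $\underline{q_0}$ and $\overline{q_0}$.
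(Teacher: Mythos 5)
Correct, and essentially the paper's own argument: both bounds are obtained by turning \eqref{P:u_concave} and \eqref{P:uL_convex} into quadratic inequalities for $q_0$ via \eqref{E:aux2}, with the same collapsed discriminants $\lb 1+\frac{\kappa}{2}\rb^2+4\delta\eps^{-2}$ and $\lb 1+\frac{\kappa}{2}\rb^2+8\delta\eps^{-2}$, intersected with \eqref{P:q0}. The only differences are cosmetic: the paper derives $\delta^2\ddot u_L(0)=\lb 1-\kappa-q_0\rb\lb 3-\tfrac{3}{2}\kappa-2q_0\rb-\delta\eps^{-2}$ from the chain-rule identity $\ddot u_L(0)=u''(0)u(0)^2+u'(0)^2u(0)$ together with \eqref{E:aux2} rather than by differentiating the Lagrangian ODE (your formula is algebraically identical under $x=1-\kappa-q_0$), and it discards the spurious branch of the quadratic by noting its larger root exceeds $1$, where you use the sign constraint $x\ge 0$ --- both yield exactly $\underline{q_0}$ and $\overline{q_0}$.
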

\begin{proof}
Due to~\eqref{E:aux2} we have
\begin{equation*}
   \delta^2 u''(0) = \lb 2-\dfrac{\kappa}{2}-q_0\rb \lb 1- \kappa - q_0\rb  - \frac{\delta}{\eps^2} = \mu\;.
\end{equation*}
The property~\eqref{P:u_concave} requires $\mu<0$ and hence we obtain
\begin{equation*}
   \underline{q_0} = \frac{3-\frac{3\kappa}{2}-\sqrt{\lb 1+\frac{\kappa}{2}\rb^2+4\delta\eps^{-2}}}{2}
	< q_0 
	< \frac{3-\frac{3\kappa}{2}+\sqrt{\lb 1+\frac{\kappa}{2}\rb^2+4\delta\eps^{-2}}}{2} \ge 1\;.
\end{equation*}
Analogously, due to $\ddot{u}_L(0) = u''(0)u(0)^2+u'(0)^2 u(0)$ and using~\eqref{E:aux2} again, we get
\begin{equation*}
   \delta^2 \ddot{u}_L(0) = \lb 1-\kappa -q_0\rb \, \lb 3-\frac{3}{2}\kappa-2q_0\rb -\frac{\delta}{\eps^2}
	= \lambda\;.
\end{equation*}
Now,~\eqref{P:uL_convex}, i.e.~$\lambda\ge 0$ leads to
\begin{equation*}
   q_0 \ge \frac{5-\frac{7\kappa}{2}+\sqrt{\lb 1+\frac{\kappa}{2}\rb^2+8\delta\eps^{-2}}}{4}>1
	 \quad\text{or}\quad
   q_0 \le \overline{q_0} 
	= \frac{5-\frac{7\kappa}{2}-\sqrt{\lb 1+\frac{\kappa}{2}\rb^2+8\delta\eps^{-2}}}{4}\;.
\end{equation*}
Summarizing the individual estimates shows the assertion.
\end{proof}

\begin{cor}[Non--existence of physically relevant solutions]
\label{C:nonexistence}
If 
\begin{equation}
\label{E:nonexistence}
  p (\kappa) =  3\lb 1-\frac{3\kappa}{2}+\frac{\kappa^2}{2}\rb \le \frac{\delta}{\eps^2}
\end{equation}
no physically relevant stationary solution exists.
\end{cor}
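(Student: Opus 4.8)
The plan is to read the Corollary straight off Theorem~\ref{T:Bound} combined with the strict positivity of $q_0$ demanded by property~\eqref{P:q0}. First I would note that Theorem~\ref{T:Bound} forces $q_0 \le \min(1,\overline{q_0}) \le \overline{q_0}$ for every physically relevant solution, while~\eqref{P:q0} requires $q_0 \in (0,1-\kappa]$, in particular $q_0 > 0$. Hence the existence of such a solution already requires $\overline{q_0} > 0$; equivalently, $\overline{q_0} \le 0$ excludes any physically relevant stationary solution. The whole task therefore reduces to showing that the hypothesis~\eqref{E:nonexistence}, namely $p(\kappa) \le \delta/\eps^2$, is precisely the condition $\overline{q_0} \le 0$.

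To establish this I would rewrite $\overline{q_0}\le 0$ as
\begin{equation*}
   5 - \frac{7\kappa}{2} \le \sqrt{\lb 1+\frac{\kappa}{2}\rb^2 + 8\delta\eps^{-2}}\;.
\end{equation*}
Since physically relevant solutions satisfy $0\le\kappa<1$, the left-hand side is positive, so the inequality is equivalent to its square. Subtracting $(1+\kappa/2)^2$ and factoring the left-hand side as a difference of squares gives
\begin{equation*}
   \lb 5-\frac{7\kappa}{2}\rb^2 - \lb 1+\frac{\kappa}{2}\rb^2
	= 4(1-\kappa)\cdot 3(2-\kappa) = 24 - 36\kappa + 12\kappa^2\;,
\end{equation*}
so that $\overline{q_0}\le 0$ becomes $24 - 36\kappa + 12\kappa^2 \le 8\delta\eps^{-2}$. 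Dividing by $8$ yields exactly $p(\kappa) = 3\lb 1-\frac{3\kappa}{2}+\frac{\kappa^2}{2}\rb \le \delta/\eps^2$, which is~\eqref{E:nonexistence}. Combining the two steps, under~\eqref{E:nonexistence} we have $\overline{q_0}\le 0$, so a physically relevant solution would satisfy $0 < q_0 \le \overline{q_0} \le 0$, a contradiction; hence no such solution exists.

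The step I expect to be the conceptual hurdle---as opposed to a merely computational one---is recognizing that the obstruction is the upper bound $\overline{q_0}$ dropping to the non-positive range, and \emph{not} the two bounds $\underline{q_0}$ and $\overline{q_0}$ of Theorem~\ref{T:Bound} crossing each other. Indeed, a short computation gives
\begin{equation*}
   \overline{q_0} - \underline{q_0} = \frac{2\sqrt{a+4\delta\eps^{-2}} - \sqrt{a} - \sqrt{a+8\delta\eps^{-2}}}{4}\;, \qquad a = \lb 1+\frac{\kappa}{2}\rb^2\;,
\end{equation*}
which is non-negative by concavity of the square root (the value at the midpoint $a+4\delta\eps^{-2}$ dominates the average of the endpoint values). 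Thus the interval for $q_0$ coming from the two derivative conditions never degenerates for $\delta>0$: non-existence arises solely because this interval is pushed below zero, violating $q_0>0$. Once this is understood, the remaining squaring and factoring are routine and use only the harmless restriction $\kappa<1$ to justify removing the square root.
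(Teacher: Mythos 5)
Your proof is correct and takes essentially the same route as the paper: the paper's one-line argument (that $\overline{q_0}$ reaching $0$ kills existence) is exactly your observation that hypothesis~\eqref{E:nonexistence} is equivalent to $\overline{q_0}\le 0$, which contradicts $0<q_0\le\overline{q_0}$ coming from Theorem~\ref{T:Bound} and property~\eqref{P:q0}. Your explicit squaring/factoring computation, the use of $\kappa<1$ to justify it, and the side remark that the interval $\lsb\underline{q_0},\overline{q_0}\rsb$ never degenerates for $\delta>0$ merely fill in details the paper leaves implicit.
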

\begin{proof}
If $\overline{q_0}=0$, physically relevant stationary solutions cannot exist.
\end{proof}

\bigskip

\begin{rem}
The results reported in Table~\ref{T:Results2} confirm these findings
\begin{table}[h]
\begin{center}
\begin{tabular}{lll|llr}
$\delta$ & $\eps$ & $\kappa$ & $\delta/\eps^2$ & $p(\kappa)$ & \\
\hline
0.133	& 0.25 	& 0.1		& 2.128	&  2.565 & convergence\\
0.135	& 0.25	& 0.1		& 2.16	& 2.565 & \emph{no convergence} \\[.5ex]
0.01		& 0.1		& 0.475	& 1		& 1.201 & convergence \\
0.01		& 0.1		& 0.48	& 1		& 1.186 & \emph{no convergence}
\end{tabular}
\caption{\label{T:Results2} Numerical results for different values of the parameters}
\end{center}\end{table}
The boundary between numerical convergence and not--convergence is close to the analytical non--existence result~\eqref{E:nonexistence}.
\end{rem}
\begin{rem}
The result of the corollary can be partly understood, if the parameters are reinterpreted in their physical context. We have, that $\delta=3/\re$ is proportional to the kinematic viscosity $\nu$. Furthermore $\eps$, being the Rossby--number, is proportional to the inverse of the rotation frequency $\omega$ of the spinning drum and $\kappa$ is proportional to the surface tension $\sigma$. With this interpretation in mind, Corollary~\ref{C:nonexistence} states, that with some scaling constant $c>0$ and with  $\tilde p (\sigma) = p(\kappa)  $, we have that 
\begin{itemize}
\item for $\nu < \overline{\nu}:= \dfrac{c}{\omega^2} \tilde p (\sigma)$ physically relevant stationary solutions may exist and
\item for $\nu > \overline{\nu}$ physically relevant stationary solutions cannot exist.
\end{itemize}
Hence, for a given rotation frequency $\omega$ and surface tension $\sigma$, fibers with a viscosity $\nu>\overline{\nu}$ cannot be spun.
The larger the surface tension, the smaller the limiting viscosity $\bar \nu$ has to be. Moreover, there is a limit surface tension $\bar \sigma$ related to the value $\kappa =1$. In case $\sigma  >  \bar \sigma$ there is no solution, independently of the values of $\epsilon$ and $\delta $.
\end{rem}

\section{Numerical results}
\label{S:Numerics}

The Figures~\ref{F:rb_invis} and~\ref{F:sv_invis} on page~\pageref{F:rb_invis} show simulations for $\eps=0.16$ and $\kappa=0.5$ compared to the results without surface tension, i.e.~$\kappa=0$. These simulations indicate that including surface tension alters the inviscid case just marginally.

Figure~\ref{F:simures} shows the parameter values, for which the equations~\eqref{E:Euler4} were solved successfully. Table~\ref{T:Results} reports some of these results. The solid line corresponds to the theoretical bound (\ref{E:nonexistence}). In Corollary~\ref{C:nonexistence} it was shown that only in the region below the solid line physically relevant stationary solutions may exist. Our numerical simulations confirm this finding.

Figure~\ref{F:q0_kapfest} illustrates the behavior of $q_0$ for fixed surface tension $\kappa=0.1$. Simulations are shown for $\eps=0.1$ and $\eps=0.24$ as well as for different values of $\delta$. The symbols indicate the numerical results, whereas the lines show the bounds for $q_0$ due to Theorem~\ref{T:Bound}. It is clearly visible that all our numerical results are within the limits predicted due to the assumption of physically relevant solutions.

\begin{figure}[ht]
\begin{minipage}[t]{.45\textwidth}
\vspace*{0pt}\par
\includegraphics[width=\textwidth]{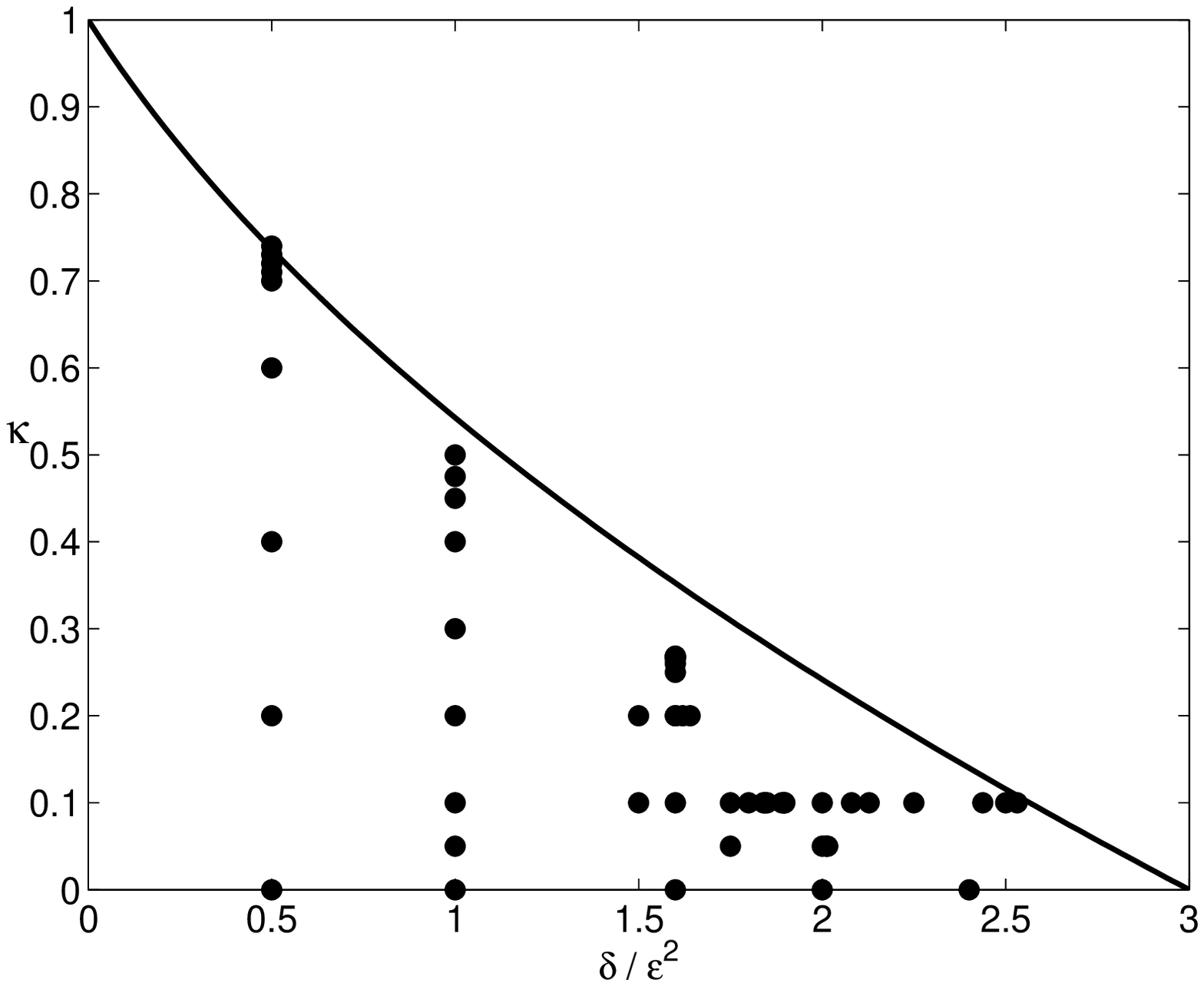}
\caption{\label{F:simures} The parameters $\delta/\eps^2$ and $\kappa$ for the successful simulations indicated by dots. The solid lines shows the theoretical bound~\eqref{E:nonexistence}.}
\end{minipage}
\hfill
\begin{minipage}[t]{.45\textwidth}
\vspace*{0pt}\par
\includegraphics[width=\textwidth]{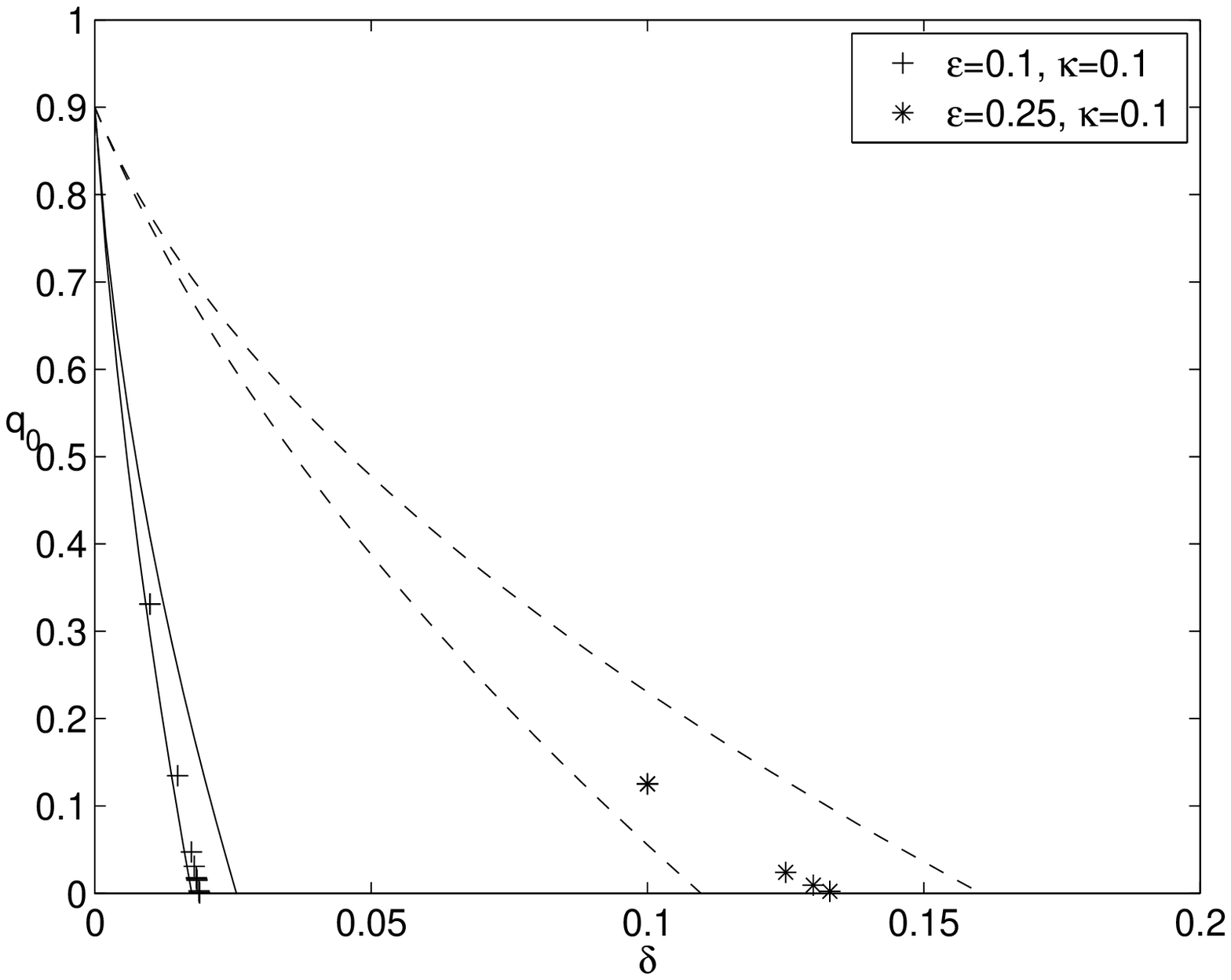}
\caption{\label{F:q0_kapfest} Plot of $q_0$ vs.~$\delta$ for $\kappa=0.1$ and $\eps=0.1,\, 0.25$.}
\end{minipage}
\end{figure}

\section{Conclusions}

There is strong numerical evidence that the regime where physically relevant stationary solutions exist is characterized by the inequality
\begin{equation}
\label{E:glory} 
   \delta < \eps^2 p(\kappa)\;.
\end{equation}

Naturally, one has to interpret a coupling of $\delta, \kappa$ and $\eps$ like~\eqref{E:glory} in terms of the kinematic viscosity $\nu$  of the fluid, the rotational frequency $\omega$ of the drum and  the surface tension $\sigma $. In these variables the statement above is translated into the following:
for  $\nu > \overline{\nu}:= \dfrac{c}{\omega^2} \tilde p (\sigma)$ the fibre can not be spun.

A detailed analysis---both theoretically and numerically--- of the behavior of transient solutions to~\eqref{E:Euler1} may reveal, whether physically relevant \emph{transient} solutions show up in the parameter regime, where stationary solutions cease to exist.

\bibliographystyle{unsrt}
\bibliography{literatur.bib}

\end{document}